\let\NAT@parse\undefined
\newcommand{\comment}[1]{}
\def\Rset{\mathbb{R}}
\def\Nset{\mathbb{N}}
\def\cN{{\cal N}}
\def\loc{{\mathrm{loc}}}
\def\app{{\mathrm{app}}}
\def\tot{{\mathrm{tot}}}
\def\bfE{{\mathbf{E}}}
\renewcommand{\t}{^{\mbox{\tiny\sf T}}}
\def\crit{{\mathrm {crit}}}
\def\app{{\mathrm{app}}}
\def\tot{{\mathrm{tot}}}
\definecolor{vert}{rgb}{0.06, 0.7, 0.6}   
\definecolor{mauve}{rgb}{0.6, 0.2, 0.99}   
\definecolor{bleu}{rgb}{0, 0.45, 1}   
\newtheorem{theorem}{Theorem}
\newtheorem{remark}{Remark}
\title{\LARGE \bf
Basic offspring number and robust feedback design for the biological control of vectors by sterile insect release technique*}
\author{Pierre-Alexandre Bliman$^{\, 1}$
\thanks{*The financial support of the STIC AmSud program is acknowledged, through the project 23-STIC-02 BIO-CIVIP
`Biological control of insects vectors and insects pests'.}
\thanks{$^{1}$Pierre-Alexandre Bliman is with Sorbonne Université, Université Paris Cité, CNRS, Inria, Laboratoire Jacques-Louis Lions, LJLL, EPC MUSCLEES, F-75005 Paris, France \href{mailto:pierre-alexandre.bliman@inria.fr}{\tt\small pierre-alexandre.bliman@inria.fr}}
}
\begin{document}

\maketitle
\thispagestyle{empty}
\pagestyle{empty}

\begin{abstract}
Sterile Insect Technique (SIT) is a promising control method against insect pests and insect vectors. It consists in releasing males previously sterilized in laboratory, in order to reduce or eliminate a specific wild population.
We study in this paper the implementation by feedback control of SIT-based elimination campaign of {\em Aedes} mosquitoes.
We provide state-feedback and output-feedback control laws and establish their convergence, as well as their robustness properties.
In this design procedure, a pivotal role is played by the
basic offspring number, and by the use of properties of monotone systems.
\end{abstract}


\section{INTRODUCTION}
\label{se0}

Controlling mosquitoes that transmit established or potentially (re)emerging diseases, such as {\em Aedes aegypti} and {\em Aedes albopictus}, vectors of dengue, chikungunya and Zika; and controlling pests that threaten agriculture, such as the Mediterranean fly ({\em Ceratitis capitata}) or the oriental fly ({\em Bactrocera dorsalis}), which oviposit under the skin surface of their host fruits and damage the harvest, is likely to become even more necessary in the future, including in the temperate zones of the world, due to global climate changes.
Control through insecticides is now used with reluctance, due to impact on biodiversity and  resistance-induced reduced efficiency.
Alternative methods include biological control, which uses beneficial insects or pathogens that they transport to control unwanted insects, weeds, or diseases, and usually target specific species, without harming others.

The method we are interested in here is the Sterile Insect Technique~(SIT)~\cite{Hendrichs:2009aa,Alphey:2010aa}, involving release of males previously sterilized (usually by irradiation, or by an alternative technique like the genetic RIDL technique or the use of Wolbachia-infected males).
The females inseminated by these males do not produce viable eggs, so that sustained releases on an area-wide basis may succeed in reducing the wild population.
Designing practically successful release campaigns on a large scale for limited cost is still a source of important questions.
In this paper, we are interested in the implementation of SIT on {\em Aedes} mosquitoes by feedback control.

SIT model and analysis results have been presented in~\cite[(1)]{Esteva:2005aa}~and \cite[(1)]{Thome:2010aa} for {\em Aedes aegypti} mosquitoes, vectors of dengue fever and other arboviruses, allowing the study of epidemic spread~\cite{Dumont:2012aa}.
This species has the particularity of experiencing intraspecific competition, possibly through reduction of the oviposition rate in congested breeding sites.
Continuing in the same vein,~\cite[(2)]{Bossin:2019aa} proposed a simpler, 3-dimensional, model including Allee effect.
As another example, intraspecific competition in the species {\em Anopheles}, vectors of malaria, occurs later in the aquatic stages of life, through mortality increase.
Such a modelling option has been explored in~\cite[(28-31)]{Anguelov:2012aa}, subsequently simplified in~\cite[(16)]{Anguelov:2020aa}, resulting in a nearby 3-dimensional model.
In the sequel we use a 3-dimensional model simply obtained by ignoring the Allee effect in~\cite[(2)]{Bossin:2019aa}.
This model, numbered~\eqref{eq2} below, is similar to some model in~\cite{Almeida:2019aa} but does not make the limiting assumptions of sex ratio 1:1 at birth and of identical male/female mortality rates, which reduce the system dimension.
It has been used for feedback control in~\cite{Bidi:2025aa,Bidi:2023ab,Cristofaro:2024aa} and this allows comparison of the techniques.

Notice that from a control theory perspective, eliminating mosquito population is a problem of stabilization of the extinction equilibrium.
A key idea to achieve this goal in a biological control method such as SIT, is to design control strategies that ensure a sufficient proportion of sterile males within the total male population, in order to reduce the basic offspring number of the population to a subcritical value.
This principle was used in~\cite{Bliman:2019aa} to design successful release campaign by periodic impulsive control, with the size of every release adjusted in accordance with the current estimate of the wild population size.
It was also used in~\cite{Bliman:2021aa} as basis for feedback synthesis in slightly different context (replacement of wild mosquito population by {\em Wolbachia}-infected mosquitoes, whose vectorial capacity is reduced).
These results were obtained in relatively simple situations, and understanding how to apply the technique to more complex models and conditions is a challenge we tackle here. 

In~\cite{Bidi:2025aa}, backstepping control~\cite{Coron:2007aa} was used to build state feedback law stabilizing the extinction equilibrium of system~\eqref{eq2}, denoted $\bfE_0$ in the sequel.
The idea therein was to stabilize the sterile-to-wild male ratio at a given value, chosen sufficiently large in order to force elimination of the mosquito population.
%
%
Static output feedback laws were also proposed in~\cite{Bidi:2025aa}, based on measured output that is either the total number of males, or the number of wild males, but stabilization was not proved in general for the corresponding policies.
Numerical simulations were provided and parametric robustness was tested numerically.
The approach taken in~\cite{Bidi:2023ab} combined reinforcement learning with mathematical analysis to identify a candidate solution for an explicit stabilizing feedback control.
The latter was based on measuring the total number of males and the total number of females in the field.
However convergence of the controlled system was only suggested by numerical simulations.
Last, in~\cite{Cristofaro:2024aa}, backstepping was used to design stabilizing control law for a reduced model and to test it numerically on the complete model~\eqref{eq2}.

We propose in the present paper a feedback control law significantly simpler than~\cite{Bidi:2025aa}, capable of achieving elimination for a population evolving according to system~\eqref{eq2}.
One of our aims is to highlight the role of the basic offspring number in biological control, and the benefits of using the monotone system theory~\cite{Smith:1995aa,Smith:2017aa}.
The model is presented in~\Cref{se1} and the control principle is described in~\Cref{se2}.
State-feedback synthesis is then achieved in~\Cref{se3}.
Taking advantage of the monotonicity properties~\cite{Smith:1995aa,Smith:2017aa} of the life cycle model, we then show in~\Cref{se7} how to ensure robust stabilization against parametric and dynamic uncertainty.
It is then exposed in~\Cref{se4} how interval observers constitute a natural way to synthesize output-based feedback laws when only partial state measurement is available.
Due to lack of space, proofs of the results and numerical illustrations are omitted and will be published in a forthcoming paper.




%


\paragraph*{Notations}
For any $z\in\Rset$, define $|z|_+ := \max\{z;0\}$.
A vector $x \in \Rset^n$, $n\in\Nset$, is said {\em nonnegative} if all its components are nonnegative;~{\em positive} if it is nonnegative and nonzero;~{\em strictly positive} if all its components are positive.
Defining the corresponding order relations, these properties are written respectively $x \geq 0_n$, resp.~$x > 0_n$, resp.~$x \gg 0_n$.
Any two elements $x<y$ generate the {\em order interval}~\cite{Smith:1995aa} $\llbracket x,y \rrbracket := \{ z\ :\ x \leq z \leq y\} $.
A matrix is said to be {\em Metzler} if all its off-diagonal components are nonnegative.
The {\em stability modulus}~\cite[p.~32]{Kaszkurewicz:2012aa} (or {\em spectral abscissa}~\cite{Deutsch:1975aa}) of a matrix, is the greatest real part of its eigenvalues.

\section{MODEL}
\label{se1}

The uncontrolled model that describes the dynamics of the wild mosquito population alone is first provided in~\eqref{eq1}.
Its state variable incorporates three compartments, representing aquatic phase ($E$), males ($M$) and fertilized females ($F$).
The model controlled through sterile male ($M_s$) releases is then presented in~\eqref{eq2}.

\paragraph*{$\bullet$ Uncontrolled system}

\begin{subequations}
\label{eq1}
\begin{eqnarray}
\label{eq1a}
\dot E
& = &
\beta_E F \left(
1 - \frac{E}{K}
\right) - (\nu_E+\delta_E) E,\\
\label{eq1b}
\dot M
& = &
(1-\nu)\nu_E E - \delta_M M,\\
\label{eq1c}
\dot F
& = &
\nu\nu_E E - \delta_F F.
\end{eqnarray}
\end{subequations}

The life of {\em Aedes} mosquitoes goes through different phases: eggs, larvae, pupae, and finally adults capable of reproducing.
The last phase is aerial, but the three first ones are all aquatic, and subject to competition for space and food resources in the breeding sites.
The state variable of the model has three components, $E,M$ and $F$.
 The component $E$ merges the whole aquatic phase in a single quantity, and $M$ represents the (fertile) males.
In natural conditions, the females are inseminated immediately after birth and produce offspring.
In presence of sterile males (see the controlled system~\eqref{eq2}), their mating with the latter do not produce viable eggs.
In the present model, the variable $F$ represents only those (fertile) females {\em that are inseminated by a fertile male}.
The recruitment term in~\eqref{eq1a} models egg breeding and incorporates `skip oviposition' behavior: the females deposit comparatively fewer eggs in sites with already high occupation rate.
Notice that this term doesn't display the male population $M$.
This apparently paradoxical modelling option is motivated by the fact already mentioned that female insemination occurs  very quickly after hatching.
See however in~\Cref{se72} how to consider extreme male scarcity.

Coherently with these explanations, the parameter $\beta_E$ represents the mean number of eggs laid by a female mosquito per time unit (typically per day); $\nu_E$ the hatching parameter (integrating survivorship and development during the whole aquatic phases); $\nu$ the sex-ratio in offspring; $K$ the egg carrying capacity; and $\delta_E, \delta_M, \delta_F$ the death rates of each of the three respective categories.



\paragraph*{$\bullet$ Controlled system}
\begin{subequations}
\label{eq2}
\begin{eqnarray}
\label{eq2a}
\dot E
& = &
\beta_E F \left(
1 - \frac{E}{K}
\right) - (\nu_E+\delta_E) E,\\
\label{eq2b}
\dot M
& = &
(1-\nu)\nu_E E - \delta_M M,\\
\label{eq2c}
\dot F
& = &
\nu\nu_E E \frac{M}{M+\gamma M_s}- \delta_F F,\\
\label{eq2d}
\dot M_s
& = &
u - \delta_s M_s.
\end{eqnarray}
\end{subequations}

The evolution of the compartments $E$ and $M$ obey the same equations in the controlled system.
The latter contains the supplementary equation~\eqref{eq2d}, which accounts for the evolution of the population of sterilized males $M_s$.
The latter, whose mortality rate is denoted $\delta_s$, are released as adults in the field with an instantaneous rate $u\geq 0$:
this time function is the control variable of the problem.
In the present paper, as well as in~\cite{Bidi:2025aa,Cristofaro:2024aa,Bidi:2023ab} we consider that this variable varies continuously in time.
(In fact, the releases are achieved periodically, so it is more realistic to consider impulsive, periodic, releases~\cite{Bliman:2019aa}.)
The main subtlety of system~\eqref{eq2} lies in~\eqref{eq2c}: here, due to the presence of the sterilized males, only a reduced proportion of mating gives rise to viable eggs.
The parameter $\gamma$, usually smaller than $1$, represents the relative mating competitiveness of the sterile males.
When $\gamma=1$, the fraction $\frac{M}{M+M_s}$ is exactly the proportion of fertile males in the total male population.

All parameters in models~\eqref{eq1} and~\eqref{eq2} are assumed positive, with $\nu < 1$.
For Aedes mosquitoes, measurements have shown that the mortality rates usually fulfil:
$\delta_F \leq \delta_M \leq \delta_s$.
While this is quite important in practice, the mathematical results below are valid without these assumptions.

\comment{
For simplicity, one considers control functions $u\in L^\infty_\loc(0,+\infty)$ such that $u \geq 0$.
We call {\em solution of~\cref{eq2}} any quadruple $(E,M,F,M_s)$ of functions in $C^1\times C^1\times C^1\times W^{1,\infty}_\loc$ fulfilling the equations almost everywhere.
For sake of space, proof of the following result is omitted.
}

It is easy to show that for any nonnegative initial condition and nonnegative control input, the solution of~\eqref{eq2} takes on nonnegative values, and $(E(t),M(t),F(t))$ converges to the forward-invariant order interval $\llbracket \bfE_0, \bfE^U \rrbracket$, where $\bfE_0 := (0,0,0)$, $\bfE^U := K\left(
1, \frac{(1-\nu)\nu_E}{\delta_M},\frac{\nu\nu_E}{\delta_F}
\right)$.
Moreover, $M(0)>0$ implies $M(t)>0$, $t> 0$; and $E(0)>0$ or $F(0)>0$ implies $E(t),M(t),F(t)>0$, $t> 0$.


\comment{
\begin{proof}
From~\eqref{eq2a} one deduces that
$ - (\nu_E+\delta_E) E \leq \dot E \leq
\beta_E F \left(
1 - \frac{E}{K}
\right)$,
yielding boundedness of $E$ and more precisely
\[
0 \leq \liminf_{t\to +\infty} E(t) \leq \limsup_{t\to +\infty} E(t) \leq K
\]
for every trajectory.
From~\eqref{eq2b} one then deduces that $M$ is bounded and
\[
0 \leq \liminf_{t\to +\infty} M(t) \leq \limsup_{t\to +\infty} M(t) \leq \frac{(1-\nu)\nu_E}{\delta_M}K.
\]
On the other hand, equation~\eqref{eq2c} implies $0 \leq \dot F + \delta_F F \leq \nu\nu_E E$, so that $E$ is bounded too, with
\[
0 \leq \liminf_{t\to +\infty} F(t) \leq \limsup_{t\to +\infty} F(t) \leq \frac{\nu\nu_E}{\delta_F}K,
\]
for any initial condition and any $u$ as in the statement.
The boundedness property implies that the solutions are defined on $[0,+\infty)$.

The fact that any component initially positive remains positive for any time is a consequence of the estimates~\eqref{eq85} below.
Positivity of $E$ makes $M$ and $F$ positive, and positivity of $F$ renders $E$ positive.
\end{proof}
}

\section{CONTROL PRINCIPLE}
\label{se2}

We provide here the basic idea used for control synthesis.
This requires precise analysis of the existence and stability of the equilibrium points of~\eqref{eq1}.
This is first recalled in \Cref{se21}, after which the key result is displayed in \Cref{se22} (\Cref{th2}).

\subsection{Basic offspring number and analysis of the uncontrolled system~\eqref{eq1}}
\label{se21}

We recall in this section the asymptotic behaviour of the uncontrolled system~\eqref{eq1}, highlighting the key role of the {\em basic offspring number}.
Related analysis results and demonstrations have been provided in~\cite{Anguelov:2012aa,Anguelov:2020aa,Bidi:2025aa}.

Ordering the state vector components as $(E,M,F)$, the Jacobian matrix of system~\eqref{eq1} is
\begin{equation*}
\begin{pmatrix}
- \frac{\beta_E}{K}F - (\nu_E+\delta_E) & 0 & \beta_E \left(
1 - \frac{E}{K}
\right)\\
(1-\nu)\nu_E & - \delta_M & 0\\
\nu\nu_E & 0 & - \delta_F
\end{pmatrix}.
\end{equation*}
This matrix is a Metzler matrix at any $(E,M,F)$ in the forward-invariant set $\llbracket \bfE_0, \bfE^U \rrbracket$,
so that system~\eqref{eq1} is cooperative~\cite{Smith:1995aa} in $\llbracket \bfE_0, \bfE^U \rrbracket$, see~\cite[Proposition 3.1.1]{Smith:1995aa}.


The system of equations defining the equilibrium points of system~\eqref{eq1} is
$0 = \beta_E F \left(
1 - \frac{E}{K}
\right) - (\nu_E+\delta_E) E
= (1-\nu)\nu_E E - \delta_M M
= \nu\nu_E E - \delta_F F$,
and the {\em extinction equilibrium} $\bfE_0$ 
is always a solution.

On the other hand, due to the last two equations, all components of an equilibrium have to respect given relative proportions, so that any possible positive equilibrium is indeed {\em strictly positive}, due to~\eqref{eq1b} and~\eqref{eq1c}.
Moreover, eliminating the nonzero value $F$ in the first equation thanks to the third one yields
$\beta_E \frac{\nu\nu_E}{\delta_F} \left(
1 - \frac{E}{K}
\right) = \nu_E+\delta_E$,
which has zero or one positive solution, according to the sign of $\beta_E \nu\nu_E - (\nu_E+\delta_E)\delta_F$.
The asymptotic behaviour is summarized by the following result.

\begin{theorem}
\label{th1}
Let
$\cN :=
\frac{\beta_E \nu\nu_E}{(\nu_E+\delta_E)\delta_F}$.
If $\cN \leq 1$, then $\bfE_0$ is the unique equilibrium point of system~\eqref{eq1}, and it is GAS in $\Rset_+^3$.
If $\cN > 1$, then, apart from $\bfE_0$, system~\eqref{eq1} also admits the positive equilibrium point
\begin{equation}
\label{eq6}
\bfE^*
:= K\left(
1 - \frac{1}{\cN}
\right) \left(
1, \frac{(1-\nu)\nu_E}{\delta_M}, \frac{\nu\nu_E}{\delta_F}
\right),
\end{equation}
and the latter is GAS in $\Rset_+^3\setminus\{(0,M,0)\ :\ M\geq 0\}$.
If $\cN \neq 1$, then the GAS equilibrium points are indeed exponentially stable.
\hfill $\square$
\end{theorem}

The constant $\cN$ is called the {\em basic offspring number}.
It characterizes the viability of the considered population in absence of control.
Notice that $\bfE^* = \left(
1 - \frac{1}{\cN}
\right) \bfE^U$.

\comment{
\paragraph*{Proof of \Cref{th1}}
\color{red}
As a preliminary, notice that~\eqref{eq1a}-\eqref{eq1c} evolve independently of the variable $M$.
The Jacobian matrix of this subsystem is
\begin{equation}
\label{eq133}
\begin{pmatrix}
- \frac{\beta_E}{K}F - (\nu_E+\delta_E) & \beta_E \left(
1 - \frac{E}{K}
\right)\\
\nu\nu_E & - \delta_F
\end{pmatrix}
\end{equation}
and is irreducible for any $(E,F)\in [0,K)\times [0,\frac{\nu\nu_E}{\delta_F}K]$.
Therefore, the $(E,F)$-subsystem is {\em strongly monotone}~\cite[p.~3]{Smith:1995aa} in the forward-invariant set $[0,K)\times [0,\frac{\nu\nu_E}{\delta_F}K)$, due to~\cite[Theorem 4.1.1]{Smith:1995aa}.
According to the value of $\cN$,~\eqref{eq1a}-\eqref{eq1c} admits one or two equilibrium points, which are the projections of $\bfE_0$ and $\bfE^*$.

\noindent $\bullet$ Assume $\cN \leq 1$.
There is no other equilibrium point of the $(E,F)$-subsystem other than the projection $(0,0)$ of $\bfE_0$, and~\cite[Theorem~2.3.1]{Smith:1995aa} implies that it is GAS.
Under these conditions, the $M$ component also vanishes, and the equilibrium $\bfE_0$ of~\eqref{eq1} is GAS.

\noindent $\bullet$ Assume $\cN > 1$.
System~\eqref{eq1} then admits the two distinct equilibrium points $\bfE_0$ and $\bfE^*\gg \bfE_0$.
To show that $\bfE^*$ attracts every trajectory from $\Rset_+^3\setminus\{(0,M,0)\ :\ M\geq 0\}$,  consider the $(E,F)$-subsystem.
It possesses two equilibrium points $(0,0)$ and $K\left(
1 - \frac{1}{\cN}
\right) \left(
1, \frac{\nu\nu_E}{\delta_F}
\right)$, deduced from $\bfE_0$ and $\bfE^*$ by projection on the 1st and 3rd components.
In absence of a third equilibrium point in the interval between these two points,~\cite[Theorem~2.2.2]{Smith:1995aa} ensures that one of the two points attracts the trajectories issued from all initial conditions except the other equilibrium.
The Jacobian matrix~\eqref{eq133} in $(0,0)$ admits the characteristic polynomial $s^2 + (\nu_E+\delta_E+\delta_F)s + (\nu_E+\delta_E)\delta_F-\nu\nu_E\beta_E = s^2 + (\nu_E+\delta_E+\delta_F)s + (\nu_E+\delta_E)\delta_F(1-\cN)$.
As $\cN>1$, the origin is unstable, and one deduces that every trajectory is in fact attracted to the strictly positive equilibrium point of the $(E,F)$-subsystem.
The third component $M$ evolves as an output of the previous subsystem, and one finally obtains the GAS property for $\bfE^*$.

\noindent $\bullet$
Let us now show the exponential stability properties.
Assume $\cN< 1$, and let $\varepsilon >0$.
Adding respectively $\varepsilon E$, $\varepsilon M$, $\varepsilon F$ to the right-hand sides of~\eqref{eq1a},~\eqref{eq1b},~\eqref{eq1c},
amounts to subtracting $\varepsilon$ from $\delta_E, \delta_M, \delta_F$.
This does not modify the stability of the system for small enough $\varepsilon>0$, as the corresponding value of $\cN$ is continuous with respect to $\varepsilon$.
The Jacobian at $\bfE_0$ is therefore Hurwitz with stability modulus smaller than $-\varepsilon$.
This shows that the convergence to $\bfE_0$ is indeed exponential.

Finally, consider the case $\cN > 1$.
Arguing as before, the system obtained by addition of the terms $\varepsilon E$, $\varepsilon M$, $\varepsilon F$ in the right-hand sides of~\eqref{eq1a},~\eqref{eq1b},~\eqref{eq1c} is still monotone in the same set, with the equilibrium $\bfE_0$ unstable.
Every trajectory outside this point is thus attracted to the (perturbed) positive equilibrium.
Similarly to the case $\cN < 1$, this shows that the Jacobian at $\bfE^*$ of the original system~\eqref{eq1} is Hurwitz with stability modulus smaller than $-\varepsilon$.
This demonstrates exponential convergence to $\bfE^*$, and achieves the proof of~\Cref{th1}.
\hfill {\small $\blacksquare$}
}

\begin{remark}
Notice that the complete system~\eqref{eq1} is not {\em strongly order preserving}~\cite[p.~2]{Smith:1995aa}, otherwise $\bfE^*$ would be GAS in $\Rset_+^3\setminus\{\bfE_0\}$ instead of $\Rset_+^3\setminus\{(0,M,0)\ :\ M\geq 0\}$.
\end{remark}

\subsection{Principle of the method: control of the apparent reproduction number}
\label{se22}

From now on and in the rest of the paper, we assume that the wild population is viable, that is
\[
\cN > 1.
\]

We study in this section a general principle of control synthesis, which provides insights useful
for state or output feedback synthesis.
It is stated in~\Cref{th2}.

From the point of view of the wild population, the presence of the sterile insects modifies solely the birth term in~\eqref{eq1a}, through the multiplication by the ratio $\frac{M(t)}{M(t)+\gamma M_s(t)}$.
This is the idea exploited here for control synthesis.
See also~\cite{Bliman:2019aa,Bidi:2025aa}.
\begin{theorem}
\label{th2}
The time function, called {\em apparent reproduction number}, 
$\cN_\app(t) := \cN \frac{M(t)}{M(t)+\gamma M_s(t)}$ is well-defined along
every trajectory of~\eqref{eq2} distinct from the extinction equilibrium $\bfE_0$.
Assume that for such trajectory, the control $u(\cdot)$ is chosen in such a way that
\begin{equation}
\label{eq9}
\exists\, \theta\in (0,1),\ \exists\, T\geq 0,\ \forall t\geq T,\qquad
\cN_\app(t) \leq \theta.
\end{equation}
Then $(E(t),M(t),F(t))$ converges exponentially to $\bfE_0$, uniformly relatively to its initial value in $\llbracket \bfE_0, \bfE^U \rrbracket$.
\hfill $\square$
\end{theorem}


\paragraph*{Sketch of proof of \Cref{th2}}

When~\eqref{eq9} holds, then $\nu\nu_E \frac{M(t)}{M(t)+\gamma M_s(t)}
= \nu\nu_E \frac{\cN_\app(t)}{\cN}
\leq
\frac{(\nu_E+\delta_E)\delta_F}{\beta_E}\theta$, $t\geq T$.
One may then replace~\eqref{eq2c} by the inequality
$\dot F \leq \frac{(\nu_E+\delta_E)\delta_F}{\beta_E}\theta E - \delta_F F$.
A comparison argument then yields the convergence.
\hfill {\small $\blacksquare$}

\comment{
The inequalities~\eqref{eq10a},~\eqref{eq10b},~\eqref{eq10c} are decoupled from~\eqref{eq10d}.
The corresponding set of three differential equations obtained by substituting $\leq$ for $=$ is formally identical to system~\eqref{eq1}, with a basic offspring number
$\cN' = \frac{\beta_E}{(\nu_E+\delta_E)\delta_F} \frac{(\nu_E+\delta_E)\delta_F}{\beta_E}\theta = \theta < 1$.
The extinction equilibrium of this system is thus GAS.
As it is a cooperative system, it may serve as a comparison system.
One deduces that each of its solutions dominates the solution of~\eqref{eq2} departing from the same initial condition.
As a conclusion, the solution of~\eqref{eq2} also converges to 0.

Let $\varepsilon >0$.
Adding $\varepsilon E$, $\varepsilon M$, $\varepsilon F$ respectively in the right-hand sides of~\cref{eq10a,eq10b,eq10c}, amounts to subtracting $\varepsilon$ from $\delta_E, \delta_M, \delta_F$ and does not modify the stability of the system for small enough $\varepsilon>0$, as the corresponding value of $\cN'$ is continuous with respect to $\varepsilon$.
This shows that the convergence is exponential, and achieves the proof of \Cref{th2}. \hfill {\small $\blacksquare$}
}


\begin{remark}
The speed of convergence of the population to zero under a control that verifies~\eqref{eq9} may be guaranteed (that is, estimated from below) using 
a comparison system.
On the other hand, whatever the control $u$,
$\dot E \geq - (\nu_E+\delta_E) E$,
$\dot M \geq - \delta_M M$ and
$\dot F \geq - \delta_F F$.
This imposes hard limitations on the convergence speed, usually not restrictive in practice.
\end{remark}

Ensuring condition~\eqref{eq9} is thus sufficient to stabilize the origin of the wild insect equations.
We will apply this idea in~\Cref{se3} (resp.~\Cref{se4}) to the synthesis of stabilizing state feedback (resp.~output feedback) control laws through continuous-time releases.
It is intuitively reasonable to expect that~\eqref{eq9} may be fulfilled provided the released quantities are `large enough'.
However, we also aim at constructing control laws that vanish when reaching the control goal.
We will see in the sequel that it is possible to complete stabilization with a {\em finite} total amount of released insects, i.e.~with the property
\begin{equation}
\label{eq24}
\int_0^{+\infty} u(t)\cdot dt < +\infty.
\end{equation}

Before going further, notice that~\eqref{eq9} is equivalent to
$\exists\, \theta\in (0,1)$, $\exists\, T\geq 0$, $\forall t\geq T$,
$\frac{M_s(t)}{M(t)} \geq \alpha(\theta)
:= \frac{1}{\gamma}\left(
\frac{\cN}{\theta} - 1
\right)$.
The map $\theta\mapsto \alpha(\theta)$ is {\em decreasing} and maps $(0,1)$ into $(\frac{1}{\gamma}(\cN - 1), +\infty)$, and finally~\eqref{eq9} is equivalent to
\begin{equation}
\label{eq14}
\exists\, \alpha > \alpha_\crit := \frac{1}{\gamma} (\cN - 1),\ \exists\, T\geq 0,\ \forall t\geq T,\
\frac{M_s(t)}{M(t)} \geq \alpha.
\end{equation}
This form is used below as cornerstone for control synthesis.

%

\section{STATE FEEDBACK SYNTHESIS}
\label{se3}

In this section we apply the idea developed in~\Cref{se2} to obtain state feedback controls.
We aim to choose $u$ in such a way as to achieve condition~\eqref{eq14} (or equivalently~\eqref{eq9}) by use of permanent releases, and then benefit from the application of \Cref{th2}.
For this, we build on an idea introduced in~\cite{Bliman:2021aa}, improved and exploited in the sequel.

Let $\alpha_\crit < \alpha < \alpha'$ and $\omega >0$.
If $u$ is chosen in such a way that
\begin{equation}
\label{eq15}
\frac{d(M_s - \alpha' M)}{dt} + \omega (M_s(t)- \alpha'M(t)) \geq 0,\qquad
t\geq 0,
\end{equation}
that is $\dot M_s \geq \alpha' \dot M - \omega (M_s(t)- \alpha'M(t))$, then by integration
\begin{equation}
\label{eq18}
M_s(t) - (M_s(0) - \alpha' M(0))e^{-\omega t}
\geq  \alpha' M(t)
>  \alpha M(t),
\end{equation}
for any $t\geq 0$, and thus
$\displaystyle \liminf_{t\to +\infty}\ \textstyle (M_s(t) - \alpha M(t)) \geq 0$.
However, this does {\em not} imply that $M_s(t) \geq \alpha M(t)$ on $[T,+\infty)$ for large enough $T>0$.
In fact,~\eqref{eq18} implies
$\frac{M_s(t)}{M(t)}
\geq  \alpha' + \frac{\alpha' M(0) - M_s(0)}{M(t)} e^{-\omega t}$.
But the right-hand side here is {\em not} guaranteed to be positive as $M(t)$ goes to zero: the numerator of the fraction could possibly be negative and the fraction itself could take negative arbitrarily large values as $M(t)\to 0$.

Alternatively to~\eqref{eq15}, consider the condition
\begin{equation}
\label{eq20}
\frac{d}{dt}\left(
\frac{M_s(t)}{M(t)} -\alpha'
\right) + \omega \left(
\frac{M_s(t)}{M(t)} -\alpha'
\right) \geq 0,\quad
t \geq 0.
\end{equation}
When~\eqref{eq20} is verified, one obtains
$\frac{M_s(t)}{M(t)}
\geq \alpha' + \left(
\frac{M_s(0)}{M(0)} -\alpha'
\right) e^{-\omega t}$, $t \geq 0$,
which ensures the desired property~\eqref{eq14}, contrary to~\eqref{eq18}.
Inequality~\eqref{eq20} is expressed equivalently as
$\dot M_s \geq \frac{M_s(t)}{M(t)} \dot M(t) + \omega ( \alpha' M(t) - M_s(t))$, $t\geq 0$,
This imposes the following unilateral constraint on the value of $u$:
\begin{equation}
\label{eq22}
u (t) \geq \frac{M_s(t)}{M(t)} \dot M(t) + \omega \alpha' M(t) + (\delta_s- \omega) M_s(t),\quad t \geq 0,
\end{equation}
while, for its part, \eqref{eq15} yields
\begin{equation}
\label{eq17}
u(t) \geq \alpha' \dot M(t) + \omega \alpha' M(t) + (\delta_s - \omega) M_s(t),
\hspace{.4cm} t \geq 0.
\end{equation}
This is not fully satisfying, as the first right-hand side term of~\eqref{eq22} could possibly take on large or unbounded values when the denominator vanishes, requiring large or unbounded values for the control $u$.

To circumvent this difficulty, we slightly adapt condition~\eqref{eq22} and~\eqref{eq17} in a way convenient for our purpose.


\begin{theorem}
\label{th3}
Assume that for some given constants $\alpha > \frac{1}{\gamma}(\cN - 1)$, $\omega >0$, one has
\begin{equation}
\label{eq27}
u(t) \geq \alpha |\dot M|_+ + \omega \alpha M(t) + (\delta_s - \omega) M_s(t),\qquad t \geq 0.
\end{equation}
Then, for all trajectories of system~\eqref{eq2}, $(E(t),M(t),F(t))$ converges exponentially to the extinction equilibrium $\bfE_0$.

Moreover, if $\omega \geq \delta_s$ and there exists $C\geq 1$ such that
\begin{equation}
\label{eq30}
u(t) \leq C\left|
\alpha |\dot M|_+ + \omega\alpha M(t) + (\delta_s - \omega ) M_s(t)
\right|_+,\quad t \geq 0,
\end{equation}
then~\eqref{eq24} holds.
The same holds if $\omega <\delta_s$ and
\begin{equation}
\label{eq230}
C < \frac{\delta_s}{\delta_s-\omega}.
\end{equation}
\hfill $\square$
\end{theorem}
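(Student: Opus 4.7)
The plan is to first show that condition \eqref{eq27} forces the ratio $R(t) := M_s(t)/M(t)$ to eventually satisfy $R(t) \geq \alpha_0$ for some $\alpha_0 \in (\frac{1}{\gamma}(\cN-1), \alpha)$, which puts us in the setting of \eqref{eq14} and lets us invoke \Cref{th2} for exponential convergence. The integrability statement will then follow by bounding $u$ through \eqref{eq30} and exploiting the exponential decay of $E$ and $M$ established in the first part.

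Using $\dot M_s = u - \delta_s M_s$, I would rewrite \eqref{eq27} as $\dot M_s + \omega M_s \geq \alpha |\dot M|_+ + \omega \alpha M$. The degenerate initial data $\bfE_0$ and $(0, M(0), 0)$ generate trajectories that converge trivially to $\bfE_0$; otherwise \Cref{th0} ensures $M(t) > 0$ for $t > 0$, so $R$ is well defined. Computing
\[
\dot R + \omega R = \frac{\dot M_s + \omega M_s - R \dot M}{M} \geq \omega \alpha + \frac{\alpha |\dot M|_+ - R \dot M}{M},
\]
the crucial observation is that, when $R \leq \alpha$, the numerator $\alpha |\dot M|_+ - R \dot M$ is nonnegative: if $\dot M \geq 0$ it equals $(\alpha - R)\dot M \geq 0$, and if $\dot M < 0$ it equals $-R \dot M \geq 0$ by nonnegativity of $R$. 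Hence $\dot R \geq \omega(\alpha - R)$ throughout the region $\{R \leq \alpha\}$.

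Fix any $\alpha_0 \in (\frac{1}{\gamma}(\cN-1), \alpha)$. I would argue that the set $\{R \geq \alpha_0\}$ is forward invariant: at a hypothetical first escape time $\tau$, continuity gives $R(\tau) = \alpha_0 < \alpha$, and comparison with $\dot y = \omega(\alpha - y)$ on $[\tau, \tau+\epsilon]$ yields $R(t) \geq \alpha_0 + (\alpha - \alpha_0)(1 - e^{-\omega(t-\tau)}) > \alpha_0$ for $t > \tau$, a contradiction. If instead $R(0) < \alpha_0$, then on any initial interval where $R < \alpha_0$ one has $\dot R \geq \omega(\alpha - \alpha_0) > 0$, so $R$ reaches $\alpha_0$ by time at most $(\alpha_0 - R(0))/(\omega(\alpha - \alpha_0))$ and remains above by invariance. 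In all cases there exists $T \geq 0$ with $R(t) \geq \alpha_0$ for $t \geq T$; condition \eqref{eq14} is then satisfied and \Cref{th2} delivers the exponential convergence of $(E, M, F)$ to $\bfE_0$.

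For the integrability, applying $|a+b+c|_+ \leq |a|_+ + |b|_+ + |c|_+$ to \eqref{eq30} yields $u(t) \leq C\alpha |\dot M|_+ + C\omega\alpha M + C|\delta_s - \omega|_+ M_s$. Exponential decay of $E$ and $M$, together with $|\dot M|_+ \leq (1-\nu)\nu_E E$, imply that $\int_0^\infty M\, dt$ and $\int_0^\infty |\dot M|_+\, dt$ are finite. When $\omega \geq \delta_s$ the $M_s$ term vanishes, and the conclusion is immediate. When $\omega < \delta_s$, integrating $\dot M_s = u - \delta_s M_s$ and using $M_s \geq 0$ gives $\delta_s \int_0^T M_s\, dt \leq M_s(0) + \int_0^T u\, dt$; substituting back and collecting the $\int_0^T u\, dt$ terms on the left yields $\bigl(1 - C(\delta_s - \omega)/\delta_s\bigr) \int_0^T u\, dt \leq \mathrm{const}$, with coefficient strictly positive precisely by \eqref{eq230}. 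The main subtlety is the ratio analysis in the first part: the $|\dot M|_+$ nonlinearity in \eqref{eq27}, introduced to avoid the blow-up issues discussed between \eqref{eq22} and \eqref{eq27}, must be dissected according to the sign of $\dot M$ to recover the clean monotone inequality $\dot R \geq \omega(\alpha - R)$ in the region where forward invariance is needed.
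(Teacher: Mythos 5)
Your proof is correct. The first half is essentially the paper's own argument: the paper works with $z(t):=\left|\alpha-\frac{M_s(t)}{M(t)}\right|_+$ and shows $\dot z\leq-\omega z$, which is exactly your inequality $\dot R\geq\omega(\alpha-R)$ on $\{R\leq\alpha\}$; both hinge on the same key observation that $\alpha|\dot M|_+-R\dot M\geq 0$ when $0\leq R\leq\alpha$ (split on the sign of $\dot M$), and both then land in~\eqref{eq14} with some $\alpha_0$ strictly between $\frac{1}{\gamma}(\cN-1)$ and $\alpha$ so that \Cref{th2} applies. Your packaging via forward invariance of $\{R\geq\alpha_0\}$ plus a finite hitting time is equivalent to the paper's ``$z\to 0$'' conclusion, and your handling of the degenerate initial data where $R$ is undefined is a point the paper glosses over. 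Where you genuinely diverge is the integrability claim: the paper turns~\eqref{eq30} into a differential inequality $\dot M_s+\min\{C\omega-(C-1)\delta_s;\,C\delta_s\}M_s\leq C(\alpha|\dot M|_+ +\omega\alpha M)$, notes the coefficient of $M_s$ is positive exactly when $\omega\geq\delta_s$ or~\eqref{eq230} holds, and applies Gronwall to get exponential decay of $M_s$ and hence of $u$; you instead integrate $\dot M_s=u-\delta_s M_s$ to bound $\delta_s\int_0^T M_s$ by $M_s(0)+\int_0^T u$ and bootstrap, absorbing the $\int_0^T u$ term thanks to $C(\delta_s-\omega)/\delta_s<1$. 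Your route is more elementary and gives~\eqref{eq24} directly; the paper's gives the slightly stronger conclusion that $u$ itself decays exponentially. Both are valid.
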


One has $\alpha |\dot M|_+ + \omega \alpha M = \max\{ \alpha(1-\nu)\nu_E E + \alpha(\omega - \delta_M) M; \omega \alpha M  \}$.
As the release rate $u$ takes on only nonnegative values, one may express~\eqref{eq27} equivalently as
\begin{multline}
\label{eq277}
u(t) \geq \max\{
\alpha(1-\nu)\nu_E E + \alpha(\omega - \delta_M) M + (\delta_s - \omega) M_s;\\
\omega \alpha M + (\delta_s - \omega) M_s;0\},\qquad t \geq 0,
\end{multline}
and~\eqref{eq30} may be transformed similarly.
This form highlights the fact that the expression on the right-hand side of~\eqref{eq277} is {\em piecewise linear} in the state variable.

\Cref{th3} states that any release strategy satisfying condition~\eqref{eq27} from a certain time ensures stabilization of the equilibrium $\bfE_0$.
This condition depends upon the two design parameters $\alpha$ (which defines the aimed `safety value' of $\frac{M_s}{M}$) and $\omega$ (which assesses the convergence speed towards this value).
Thanks to the use of monotone system theory, it is expressed as an {\em inequality} defining minimal release rate, and possesses inherently robustness properties (see also~\Cref{se7}).

It is worth noting that, {\em for any $\omega>0$}, applying~\eqref{eq27} with an equality ensures stabilization with finite total release (i.e.~\eqref{eq24}), as this amounts to ensure~\eqref{eq30} with $C=1$.




\begin{remark}
\label{re2}
Obviously, condition~\eqref{eq27} is more stringent than condition~\eqref{eq17}.
On the other hand, it is also stronger than condition~\eqref{eq22} when $\dot M \leq 0$; or when $\dot M > 0$ and $\alpha
- \frac{M_s(t)}{M(t)}$ is still positive.
\end{remark}

\paragraph*{Sketch of proof of \Cref{th3}}

One first deduces from~\eqref{eq27} that $z(t) := \left|
\alpha
- \frac{M_s(t)}{M(t)}
\right|_+$ fulfils $\dot z \leq -\omega z$ when $z>0$, so that $z(t)$ vanishes when $t\to +\infty$.
This yields convergence to $\bfE_0$.
When~\eqref{eq30} holds, then $\dot M_s + \min \{C \omega - (C-1) \delta_s; C \delta_s \}\, M_s \leq C\alpha (|\dot M|_+ + \omega M)$, where $(C-1) \delta_s- C \omega <0$ by assumption and $|\dot M|_+ + \omega M$ vanishes exponentially.
Gronwall's lemma then shows exponential convergence of $M_s$ and $u$ to zero.
\hfill {\small $\blacksquare$}

\comment{

\smallskip
Let us now demonstrate that the hypotheses of the statement ensure condition~\eqref{eq28}.
When $\frac{M_s(t)}{M(t)} < \alpha$, i.e.~$0 < z(t) = \alpha - \frac{M_s(t)}{M(t)}$, one deduces from~\eqref{eq27}
\begin{eqnarray*}
\dot z
& = &
-\frac{1}{M}\dot M_s + \frac{M_s}{M^2}\dot M\\
& = &
-\frac{1}{M}(u-\delta_sM_s) + \frac{M_s}{M^2}\dot M\\
& \leq &
- \omega\left(
\alpha - \frac{M_s}{M}
\right)
- \frac{1}{M}
\left(
 \alpha |\dot M|_+ - \frac{M_s}{M}\dot M
\right)\\
& \leq &
- \omega\left(
\alpha - \frac{M_s}{M}
\right)
\qquad \text{ (as $z>0$)}\\
& = &
- \omega z.
\end{eqnarray*}
On the other hand, when $\frac{M_s(t)}{M(t)} \geq \alpha$, then $z(t) = 0$ and $\frac{dz}{dt} = 0$.

Overall,
$\dot z \leq - \omega z$ for almost every $t\geq 0$,
so that
$0
= \displaystyle{\lim_{t\to +\infty}} z(t)
= \displaystyle{\lim_{t\to +\infty}}\ \textstyle \left|
\alpha
- \frac{M_s(t)}{M(t)}
\right|_+$,
that is formula~\eqref{eq28}.
As established before, this is sufficient to prove the exponential convergence result.

\smallskip
Assume now that~\eqref{eq30} holds.
Due to~\eqref{eq2d}, the evolution of $M_s$ then fulfils
$\dot M_s
\leq C \left|
\alpha |\dot M|_+ + \omega\alpha M(t) + (\delta_s - \omega ) M_s(t)
\right|_+ - \delta_s M_s(t)
=
\max\{
C \alpha |\dot M|_+ + C \omega\alpha M(t) + ((C-1) \delta_s- C \omega) M_s(t) ;  - C \delta_s M_s(t)
\}$.

The coefficient $(C-1) \delta_s- C \omega$ is negative if and only if $\delta_s \leq \omega$, or if $\delta_s > \omega$ and~\eqref{eq230} holds.
In such case, one then has $\dot M_s + \min \{C \omega - (C-1) \delta_s; C \delta_s \}\, M_s \leq C(\alpha |\dot M|_+ + \omega\alpha M)$.
In this differential inequality, the coefficient of $M_s$ is positive and the right-hand side  converges exponentially to zero.
Applying Gronwall's lemma shows that $M_s$ converges exponentially to zero and $u$ as well, so  that~\eqref{eq24} holds.
This achieves the proof of \Cref{th3}.
\hfill {\small $\blacksquare$}
}

\section{ROBUSTNESS ISSUES}
\label{se7}

We discuss in this section how the previous stabilization result may be extended in presence of uncertainty.
This flexibility comes as a major benefit of the monotonicity-based argument.
We present in~\Cref{se71} a result concerning uncertainty on the parameters of system~\eqref{eq2}; and in~\Cref{se72} an important extension, in case where an Allee effect is added to the model (dynamic uncertainty).

\subsection{Parametric uncertainties}
\label{se71}

Due to~\Cref{th3},~\eqref{eq27} has two remarkable properties.
It aims at ensuring permanently a value of the ratio $\frac{M_s}{M}$ sufficient to reduce the wild population until it gets extinct.
As such, it depends only upon the dynamics~\eqref{eq2d} of $M_s$ and upon the value of $M$ and $\dot M$.
It takes the form~\eqref{eq277} when~\eqref{eq2b} is used, but other dynamics of $M$ could be handled in the same way.
Second, condition~\eqref{eq27} does not impose a given supply rate, but only a minimal value for the latter.
Both this inequality form and the previously mentioned genericity of the approach ensure intrinsic robustness properties.

We propose here a robustness result as illustration, which ensures the same results than~\Cref{th3}, based on upper and lower ($U$ and $L$) estimates of the parameters and of the state variables.
\begin{theorem}
\label{th5}
Let $\alpha > \frac{1}{\gamma^L}(\cN^U - 1)$, $\omega > 0$, and assume $u$, defined for $t \geq 0$, fulfils
$u \geq \max\{
\alpha(1-\nu^L)\nu_E^U E^U + \alpha(\omega - \delta_M^L) M^U + (\delta_s^U - \omega) M_s^L;
\omega \alpha M^U + (\delta_s^U - \omega) M_s^L;0\}$,
for $\gamma^L\leq\gamma$, $\cN \leq\cN^U$, $\nu^L \leq \nu$, $\nu_E^U \geq \nu_E$, $\delta_M^L \leq \delta_M$, $\delta_s \leq \delta_s^U$,
and for signals $E^U(t) \geq E(t)$, $M^U(t) \geq M(t)$, $M_s^L(t) \leq M_s(t)$ for any $t\geq 0$.
Then, provided
$\omega \geq \max\{ \delta_M^L, \delta_s^U\}$,
$(E(t),M(t),F(t))$ converges exponentially to the extinction equilibrium $\bfE_0$ for every trajectory of system~\eqref{eq2}.
\hfill $\square$
\end{theorem}

\Cref{th5} shows that the proposed feedback rule is robust against parametric uncertainty, but also against estimation error on the state variables.

\comment{
\paragraph*{Proof of~\Cref{th5}}
Under condition~\eqref{eq53}, one has
$(\omega - \delta_M^L) M^U
- (\omega - \delta_M) M
= (\omega - \delta_M^L) (M^U-M) +(\delta_M - \delta_M^L)M \geq 0$ and
$(\delta_s^L - \omega) M_s^L - (\delta_s - \omega) M_s
= (\delta_s^L - \omega) (M_s^L - M_s) + ( \delta_s^L - \delta_s)M_s \geq 0$.
Under these conditions,~\eqref{eq278} implies~\eqref{eq277}, which is just inequality~\eqref{eq27}.
The proof is achieved by use of~\Cref{th3}, using the fact that $\alpha > \frac{1}{\gamma^L}(\cN^U - 1) \geq \frac{1}{\gamma}(\cN - 1)$.
\hfill {\small $\blacksquare$}
}

\subsection{Dynamic uncertainties: Allee effect}
\label{se72}

A variant of system~\eqref{eq2} is considered in~\cite{Bossin:2019aa}, with~\eqref{eq2c} replaced by
\addtocounter{equation}{-14}
\begin{subequations}
\addtocounter{equation}{4}
\begin{equation}
\label{eq2e}
\dot F = \nu\nu_E E \left(
1-e^{-\beta(M+\gamma M_s)}
\right) \frac{M}{M+\gamma M_s}- \delta_F F,
\end{equation}
\end{subequations}
for some $\beta>0$.
This term introduces an {\em Allee effect}~\cite{Fauvergue:2013aa}, that is a positive correlation between population density and growth rate.
\addtocounter{equation}{13}
The following result handles such a case.
\begin{theorem}
\label{th6}
Assume that for given constants $\alpha > \frac{1}{\gamma}(\cN - 1)$, $\omega >0$, $u$ fulfils~\eqref{eq27}.
Then, for all trajectories of system~\eqref{eq2a}-\eqref{eq2b}-\eqref{eq2e}-\eqref{eq2d}, $(E(t),M(t),F(t))$ converges exponentially to the extinction equilibrium $\bfE_0$.
\hfill $\square$
\end{theorem}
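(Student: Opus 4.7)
}

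The plan is to reuse the two-stage structure of the proof of \Cref{th3}, noting that the Allee modification affects only the $F$-equation, and therefore neither the control loop governing the male ratio nor the subsequent comparison argument on $(E,M,F)$.

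First, I would observe that equations \eqref{eq2a}, \eqref{eq2b}, \eqref{eq2d} are identical between system \eqref{eq2} and the Allee-modified system \eqref{eq2a}-\eqref{eq2b}-\eqref{eq2e}-\eqref{eq2d}, and that the control hypothesis \eqref{eq27} only involves $M$, $\dot M$, $M_s$ and constants. Consequently, the derivation given in the proof of \Cref{th3} that $\dot z \leq -\omega z$ for $z(t) := \left|\alpha - \frac{M_s(t)}{M(t)}\right|_+$, which uses solely \eqref{eq2b}, \eqref{eq2d} and \eqref{eq27}, goes through verbatim and yields $z(t)\to 0$. This gives \eqref{eq14} with $\alpha$ replaced by any $\alpha' \in \left(\frac{1}{\gamma}(\cN-1),\, \alpha\right)$, or equivalently $\cN_\app(t) \leq \theta$ for some $\theta \in (0,1)$ and all sufficiently large $t$.

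Next, I would upper-bound the Allee-modified recruitment term by using $0 \leq 1 - e^{-\beta(M+\gamma M_s)} \leq 1$, which turns \eqref{eq2e} into
\[
\dot F \leq \nu \nu_E E \, \frac{M}{M+\gamma M_s} - \delta_F F,
\]
an inequality formally identical to the one derived from \eqref{eq2c} in the proof of \Cref{th2}. Once $\cN_\app \leq \theta$ holds, the triple $(E,M,F)$ therefore satisfies the differential inequalities \eqref{eq10a}-\eqref{eq10b}-\eqref{eq10c} exactly as before. Dominating them by the cooperative linear comparison system obtained by turning the $\leq$ sign into an $=$, which has basic offspring number $\theta<1$ and hence extinction as its GAS equilibrium by \Cref{th1}, the standard ODE comparison theorem delivers $(E,M,F)\to \bfE_0$. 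Exponential convergence then follows from the same $\varepsilon$-perturbation trick used at the end of the proof of \Cref{th2}.

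The step that looks most delicate is the very first one, because $z$ is only defined where $M(t)>0$. The argument goes through provided one establishes the appropriate analogue for the Allee system of the positivity statement in \Cref{th0}; this should follow from the fact that the extra factor $1-e^{-\beta(M+\gamma M_s)}$ is positive as soon as $M+\gamma M_s>0$, so that the sign structure of the right-hand sides of the life-cycle equations is unchanged. Apart from this mild check, no new difficulty arises and the statement is essentially a corollary of \Cref{th3}.
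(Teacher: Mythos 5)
Your proposal is correct and follows essentially the same route as the paper: establish \eqref{eq14} from \eqref{eq27} exactly as in the proof of \Cref{th3} (the Allee term does not affect the $E$, $M$, $M_s$ equations), bound the extra factor $1-e^{-\beta(M+\gamma M_s)}$ by $1$ to recover the differential inequality \eqref{eq10c}, and conclude by the comparison argument of \Cref{th2}. Your explicit flagging of the positivity of $M$ (needed for $z$ to be well defined) is a reasonable extra check that the paper leaves implicit.
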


\comment{
\begin{proof}
First,~\eqref{eq27} implies that~\eqref{eq14} holds.
Putting $\theta = \frac{\cN}{1+\gamma\alpha} < 1$, one thus gets asymptotically
$\dot F \leq \frac{\nu\nu_E}{1+\gamma\alpha} E - \delta_F F
= \frac{(\nu_E+\delta_E)\delta_F}{\beta_E}\theta E - \delta_F F$, for any $t\geq T$.
Using comparison system as in the proof of~\Cref{th2}, one then deduces the exponential convergence towards $\bfE_0$.
\end{proof}
}

\section{OUTPUT FEEDBACK SYNTHESIS}
\label{se4}

In this section, we first review in~\Cref{se41} the quantities that may be available for measurement.
Then we provide in~\Cref{se42} a stabilization result as an interval-observer based output-feedback control.

\subsection{Measured quantities}
\label{se41}

Measurements are produced by using a variety of traps, either aimed at the aquatic phases (as `ovitraps') or the adult one, possibly boosted by mark-release-recapture method.
In absence of prior marking of the sterile males, it is quite expensive to distinguish between the sterile and the fertile males.
It is even more complex to determine whether a given female has been inseminated by a sterile or a fertile male.
Therefore, one simply assumes here, as in~\cite{Bidi:2025aa}, that are measured two global quantities, namely the total number of male insects $M_\tot(t)$, and the total number of female insects $F_\tot(t)$.
(Notice that there also exist counting methods for mosquito eggs deposited on ovitrap sticks.
The compartment $E$ models here all the aquatic phases, so the use of such data would require us to revisit the model.
This topic is not delved into in depth for lack of space.)

\begin{itemize}
\item
The male insects either come from the wild population, or are sterile males still alive after their previous introduction in the field.
Thus, $M_\tot(t) = M(t) + M_s(t)$.
When both equation parameters and released amounts are perfectly known, using an observer allows to reconstruct exactly the value of $M_s$ by using equation~\eqref{eq2d}.
In other terms, measuring $M_\tot$ allows us to get access to $M$.

\item
On the other hand, it is reasonable to assume that every female insect is fertilized once immediately after hatching, and that the adult phase takes place identically depending on whether the female has been fertilized by a fertile male or a sterile male.
Therefore, by  analogy with~\eqref{eq2c}, the number $F_s := F_\tot-F$ of females fertilized by a sterile evolves according to
\[
\dot F_s = \nu\nu_E E \frac{\gamma M_s}{M+\gamma M_s}- \delta_F F_s.
\]
Consequently, the total number $F_\tot = F_s +F$ of female insects present in the field fulfils equation~\eqref{eq1c}.
This definition is quite different from the static model $F_s = \frac{M_s}{M}F$ considered in~\cite{Bidi:2023ab}, which amounts to taking $F_\tot :=  \frac{M+M_s}{M}F$ (when $\gamma=1$).
In any case, adding the measurement $F_\tot$ implies addition of a new state component, decoupled from the rest.
For this reason, this quantity is not used in the sequel.
\end{itemize}

\subsection{Output-based stabilizing control laws}
\label{se42}

We show here how to use interval observers~\cite{Efimov:2016aa} in complement to~\Cref{th3} to obtain upper estimates of the state variables when the latter is not completely measured.
We assume that $M_\tot = M+M_s$ is measured.
Monotonicity calls for attempting to treat the case of other outputs in the same vein.

Define $x:= \begin{pmatrix} E & M & F \end{pmatrix}\t$, and $f$ the function such that~\eqref{eq2} is represented as

%

\begin{subequations}
\label{eq65}
\begin{equation}
\label{eq65a}
\begin{pmatrix} \dot x \\ \dot M_s \end{pmatrix} = f(x,M_s,u).
\end{equation}
In the definition of $f$, one changes
$\left(
1 - \frac{E^U}{K}
\right)$ to $\left|
1 - \frac{E^U}{K}
\right|_+$.
This does not modify the trajectories of the corresponding solution of~\eqref{eq2} as the latter lies inside the forward-invariant interval $\llbracket \bfE_0, \bfE^U \rrbracket$.

Define also the estimator components $x^U, M_s^L$ by
\begin{multline}
\label{eq65b}
\begin{pmatrix} \dot x^U \\ \dot M_s^L \end{pmatrix}
= f(x^U,M_s^L,u)\\
+  \begin{pmatrix} 0 & \omega_M & 0 & 0 \end{pmatrix}\t (M_\tot - M_s^L-M^U).
\end{multline}
\end{subequations}
The next result shows how to stabilize the equilibrium $\bfE_0$ of system~\eqref{eq65a} by dynamic output feedback computed from the interval observer~\eqref{eq65b}.

\begin{theorem}
\label{th7}
Assume that for some $\alpha > \alpha_\crit$, $\omega >0$, $\omega_M \geq 0$, one has
\begin{equation}
\label{eq280}
u(t) \geq \alpha |\dot M^U|_+ + \omega\alpha M^U(t) + (\delta_s-\omega) M_s^L(t),
\hspace{.24cm}
t \geq 0.
\end{equation}
Then, for any solution of system~\eqref{eq65} such that
\begin{equation}
\label{eq67}
x(t) \leq x^U(t),\qquad
M_s(t) \geq M_s^L(t) \geq 0
\end{equation}
for $t=0$,~\eqref{eq67} holds for any $t\geq 0$, and $(E(t),M(t),F(t))$ converges exponentially to the extinction equilibrium $\bfE_0$.

Moreover, if $\omega \geq \delta_s$ and there exists $C\geq 1$ such that
$u(t) \leq C\left|
\alpha |\dot M^U|_+ + \omega\alpha M^U(t) + (\delta_s - \omega ) M_s^L(t)
\right|_+$
for any $t \geq 0$, then~\eqref{eq24} holds.
The same holds if $\omega <\delta_s$ and~\eqref{eq230} is fulfilled.
\hfill $\square$
%
\end{theorem}

Initialization of the observer is achieved according to the uncertainty range.
The definition of the invariant set $\llbracket \bfE_0, \bfE^U \rrbracket$ 
or the value of $\bfE^*$ in~\eqref{eq6} (as the order interval $\llbracket\bfE_0;\bfE^*\rrbracket$ is also forward-invariant) may serve as basis.

 
Here, condition~\eqref{eq280} reads (compare with~\eqref{eq277})
$u(t) \geq \max\{
\alpha(1-\nu)\nu_E E^U + \alpha(\omega - \delta_M - \omega_M) M^U + \alpha\omega_M M_\tot + (\delta_s - \omega - \alpha\omega_M) M_s^L;
\alpha \omega M^U + (\delta_s - \omega) M_s^L;0
\}$ for any $t \geq 0$.
 
 Notice that the stabilization result of~\Cref{th7} also holds when $\omega_M=0$, as in this case both the state $x$ and the state estimate $x^U$ converge to zero.
 Taking positive values of $\omega_M$ should yield faster estimation.
 
 As a last remark, one may show that when only $M$ or only $M_s$ is measured, the same argument implies that the conclusions of \Cref{th7} remain valid when replacing $(M_\tot - M_s^L-M^U)$ by $(M - M^U)$ or $(M_s - M_s^L)$ in the feedback term.

\comment{
 \paragraph*{Proof of~\Cref{th7}}
Notice first that, for any nonnegative input control $u$, the system~\eqref{eq65a} is cooperative when considering the state variable $\begin{pmatrix} x\t  & -M_s \end{pmatrix}\t$ (this is a `competition order' as defined in~\cite{Smith:2017aa}).
Also, for any input control $u$ and any nonnegative signal $M_\tot$, the system~\eqref{eq65b} is cooperative when considering the state variable $\begin{pmatrix} x^{U \mbox{\tiny\sf T}}  & -M_s^L \end{pmatrix}\t$.

Compare now the solutions of~\eqref{eq65b} with nonnegative input control $u$ and time-varying input $M_\tot = M + M_s$, initialized respectively at the point $\begin{pmatrix} x(0)\t & M_s(0) \end{pmatrix}\t$ and at the point $\begin{pmatrix} x^U(0)\t & M_s^L(0) \end{pmatrix}\t$.
The first solution is unique, and thus equal to $\begin{pmatrix} x(t)\t  & M_s(t) \end{pmatrix}\t$, with zero feedback term $(M_\tot - M_s^L-M^U)$.
As~\eqref{eq67} holds for $t=0$, the two initial conditions are ordered (for the `competition order'), and one deduces that~\eqref{eq67} holds for any $t\geq 0$.
In particular, $M^U(t)\geq M(t) \geq 0$, and $M_s(t) \geq M_s^L(t) \geq 0$ as well, due to the fact that $M_s^L(0)\geq 0$ by assumption and by the nonnegativity of $u$.

One then gets from~\eqref{eq280} as in the proof of~\Cref{th3} that
$\alpha \leq \displaystyle \liminf_{t\to +\infty} \textstyle \frac{M_s^L(t)}{M^U(t)} \leq \displaystyle \liminf_{t\to +\infty} \textstyle \frac{M_s(t)}{M(t)}$,
so that
$\displaystyle \lim_{t\to +\infty} x(t) = 0$.

On the other hand, from the equations ruling the evolution of $M_s$ and $M_s^L$, one has $\dot M_s - \dot M_s^L = -\delta_s (M_s-M_s^L)$.
Therefore
$\displaystyle\lim_{t\to +\infty} \textstyle (M_s(t)-M_s^L(t)) = 0$,
and thus
$\displaystyle\lim_{t\to +\infty} \textstyle (M_\tot(t)-M_s^L(t)) = 0$.
Asymptotically, one then has $\dot M^U = (1-\nu)\nu_E E^U - \delta_M M^U - \omega_M M^U$, and thus
\begin{eqnarray*}
\dot E^U
& = &
\beta_E F^U \left(
1 - \frac{E^U}{K}
\right) - (\nu_E+\delta_E) E^U,\\
\dot M^U
& \leq &
(1-\nu)\nu_E E^U - \delta_M M^U,\\
\dot F^U
& = &
\nu\nu_E E^U \frac{M^U}{M^U+\gamma M_s^L} - \delta_F F^U,\\
\dot M_s^L
& = &
u - \delta_s M_s^L.
\end{eqnarray*}
Due to~\eqref{eq280}, one then deduces that $\displaystyle \lim_{t\to +\infty} \textstyle x^U(t) = 0$
as in the proof of~\Cref{th3}, and~\eqref{eq24} is ensured under the same conditions when~\eqref{eq330} holds, as well as the convergence of $M^U$ and $M_s^L$ to zero.
This ends the proof of~\Cref{th7}.
\hfill {\small $\blacksquare$}
}

%
%
%
%
%
%
%

\comment{
An important issue is to construct output-based stabilizing control laws in presence of uncertain system parameters and noisy measurements.
\Cref{th7} together with~\Cref{th5} allow the proposal of a solution to this challenge, along the following lines.
\begin{enumerate}
\item
Based on upper/lower estimates of the system parameters, construct a system of type~\eqref{eq2} providing estimates from above of $E(t), M(t), F(t)$ and from below of $M_s(t)$.
\item
Thanks to~\Cref{th7}, construct a (globally integrable) input $u$ that stabilizes this auxiliary system.
\item
Thanks to~\Cref{th5}, the corresponding input stabilizes the original system.
\end{enumerate}
Due to lack of space, details and simulations are omitted.

\section{SIMULATIONS}
\label{se6}

The numerical values of the biological constants are extracted from~\cite{Bossin:2019aa}, and similar to the choices of~\cite{Bidi:2025aa}:
$\beta_E = 10$\! day$^{-1}$,
$\gamma =1$,
$\nu_E = 0.05$\! day$^{-1}$,
$\delta_E =0.03$\! day$^{-1}$,
$\delta_M =0.1$\! day$^{-1}$,
$\delta_F =0.04$\! day$^{-1}$,
$\delta_s =0.12$\! day$^{-1}$,
$\nu =0.49$.
Notice that the set of solutions of~\eqref{eq1} depends linearly upon $K$, and that the same is true for the solutions of the controlled systems.
More interestingly, if $(E,M,F,M_s)$ is a solution of~\eqref{eq2} with $K=1$, then $K(E,M,F,M_s)$ is a solution of~\eqref{eq2} for any $K>0$, provided that the definition of $u$ is also linear with respect to $K$.
This is in particular the case when taking equality in~\eqref{eq27}.
We thus normalize the results and take $K=1$.

\begin{figure}
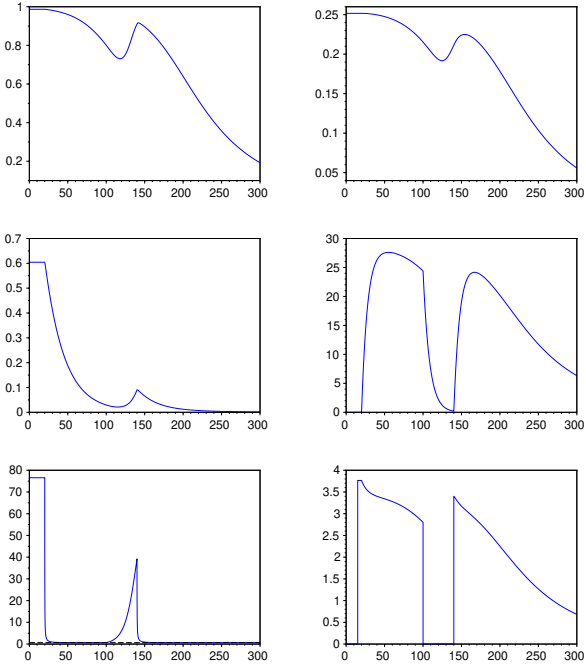

\begin{center}
\includegraphics[scale=0.19]{../FIGURES/Fig1}
\includegraphics[scale=0.19]{../FIGURES/Fig2}
\includegraphics[scale=0.19]{../FIGURES/Fig3}
\includegraphics[scale=0.19]{../FIGURES/Fig4}
\includegraphics[scale=0.19]{../FIGURES/Fig5}
\includegraphics[scale=0.19]{../FIGURES/Fig6}
\caption{Evolution as function of time (days) during application of state-feedback control $u(t) = \max\{
\alpha(1-\nu)\nu_E E + \alpha(\omega - \delta_M) M + (\delta_s - \omega) M_s; \omega \alpha M + (\delta_s - \omega) M_s;0\}$ to system~\eqref{eq2}.
{\bf From left to right:} on top, preliminary phase $E(t)$ and wild males $M(t)$; in the middle, fertilized females $F(t)$ and sterile males $M_s(t)$; on the bottom, apparent basic offspring number $\cN_\app (t)$, see~\eqref{eq8} (the dashed line represents the control goal $\theta$) and control input $u(t)$.}
\label{fi1}
\end{center}
\end{figure}

One deduces, see~\eqref{eq3}, that $\cN =\frac{\beta_E \nu\nu_E}{(\nu_E+\delta_E)\delta_F} = 30.6 > 1$.
In these conditions, the positive equilibrium is given as $\bfE^* = (0.987, 0.252, 0.605)$ and the upper bound defined in~\eqref{eq70} is $\bfE^U =  \frac{\cN}{\cN-1}\bfE^* = (1,0.255,0.613)$.
The critical value of $\alpha$ defined in~\eqref{eq14} is
$\alpha_\crit =\frac{1}{\gamma}\left(
\cN - 1
\right)
= 29.6$.

Due to space limitations, we only present here state-feedback control.
System~\eqref{eq2} is simulated with $u$ {\em equal} to the right-hand side of~\eqref{eq27} (i.e.~\eqref{eq277}).
The design parameters are
$\alpha = 1.5\, \alpha_\crit = 44.4$, corresponding to $\theta = \frac{\cN}{1+\alpha\gamma} = 0.674$ in~\eqref{eq9}; and $\omega = 1.1\, \delta_s = 0.132$\! day$^{-1}$.

The results are presented in \Cref{fi1}.
At time $t=0$, the system departs from the equilibrium $\bfE^*$.
Then at time $t=20$\! days, the control is switched on.
A failure is emulated between $t=100$ and $t=140$\! days, during which the control is stopped.
The control is switched on again afterwards.

At $t=20$\! days, the control input $u$ jumps to a large value, triggering a sharp increase of the number of sterile males $M_s$ and a rapid drop of the apparent basic offspring number $\cN_\app$.
This reduces the birth rate of the fertilized females $F$, at a slower pace.
This trend then induces an even slower reduction of the number of mosquitoes $E$ in aquatic phase and of wild males $M$.
The input variable $u$ decreases accordingly.

When $t\in [100,140]$, the number of sterile males decreases rapidly and the apparent basic offspring number increases.
This is sufficient to trigger significant growth of $F$, $E$ and $M$.

After $t=140$ days the control starts again, from a value larger than at $t=100$ days, and the population reduction occurs in a way similar to the situation after $t=20$ days.
}

\section{CONCLUSION}
\label{se5}

We studied in this paper the implementation by feedback control of the Sterile Insect Technique against {\em Aedes} mosquitoes.
State-feedback and output-feedback control laws have been proposed, and the
handling of parametric and dynamical uncertainties of the model has been considered.
The approach adopted here appears as simpler, more flexible and more powerful than the ones previously considered. 
For sake of space, complete proofs of the results and numerical illustrations were deferred to a subsequent extended publication, which will also include new contributions.

Biological control methods aim 
at controlling specific pest or vector using another living organism.
They raise interesting questions to control theory.
In this respect, two original technical points examplified in this note are worth noticing.
First, a central role was played by the basic offspring number.
This quantity shrinks when the proportion of sterile males in the total male population increases, so that reducing and keeping it below the threshold level under which the population is unviable, may be used as a control goal to guarantee elimination.
On the other hand, the results exposed here were all obtained thanks to monotone system theory, and the other directions  explored in the forthcoming complete version of this work are based on the same foundation.
Monotone system theory appears quite well-fitted, powerful and flexible to study the interaction of competitive species in such situations.
We believe the two previous features constitute interesting leads for the design of control strategies of proven effectiveness.



\section*{ACKNOWLEDGMENT}

The author expresses grateful thanks to Professor Amit Bhaya (COPPE/UFRJ, Brazil) for careful reading and valuable comments on the manuscript.

\bibliographystyle{plain}
\bibliography{Biblio-SIT}

\end{document}